\documentclass[sigplan,nonacm, 10pt]{acmart}
\definecolor{BrevisLink}{rgb}{0.31,0.25,0.56}
\makeatletter
\@ACM@balancefalse
\makeatother
\usepackage{flushend}

\usepackage{mathpartir}
\usepackage{stmaryrd}
\newtheorem{proposition}{Proposition}

\usepackage{tikz}
\usetikzlibrary{arrows.meta,positioning,fit,backgrounds,calc}

\usepackage{algorithm}
\usepackage{algorithmic}

\usepackage{xspace}
\newcommand{\tool}{\textsc{Brevis}\xspace}
\newcommand{\W}[1]{\mathbb{W}_{#1}}
\newcommand{\bits}{\mathsf{Bits}}
\newcommand{\execp}{\mathsf{Exec}}

\mathchardef\acmmathcomma=\mathcode`,
\begingroup
\catcode`\,=\active
\gdef\acmbreakablemathcommas{%
  \mathcode`\,="8000
  \def,{\acmmathcomma\allowbreak}}
\endgroup

\let\acmtemplateincludegraphics\includegraphics
\RenewDocumentCommand{\includegraphics}{s O{} m}{%
  \begingroup
  \def\acmfigurefile{#3}%
  \ifdefstring{\acmfigurefile}{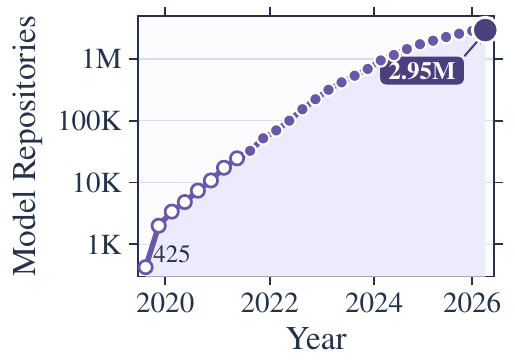}
    {\def\acmfigurefile{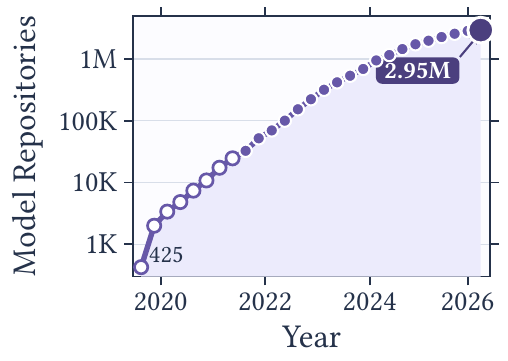}}{}%
  \ifdefstring{\acmfigurefile}{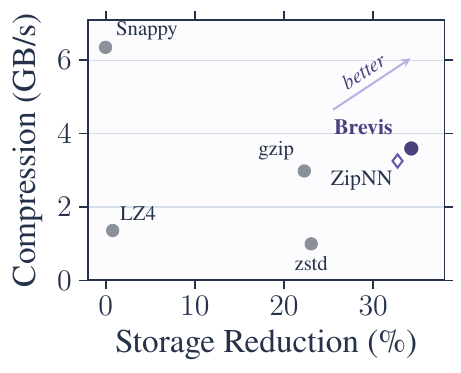}
    {\def\acmfigurefile{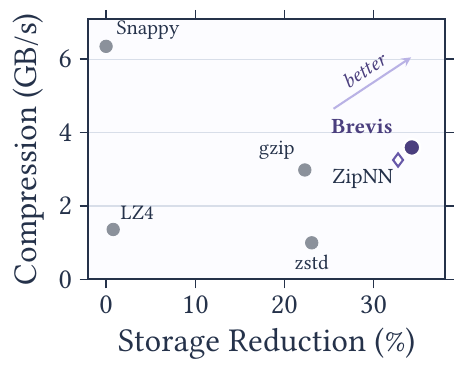}}{}%
  \ifdefstring{\acmfigurefile}{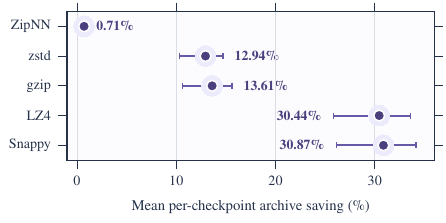}
    {\def\acmfigurefile{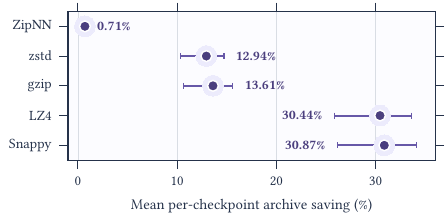}}{}%
  \ifdefstring{\acmfigurefile}{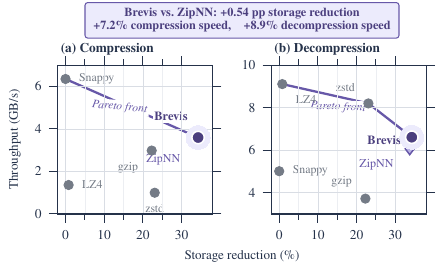}
    {\def\acmfigurefile{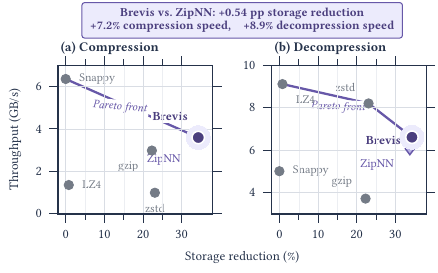}}{}%
  \IfBooleanTF{#1}
    {\acmtemplateincludegraphics*[#2]{\acmfigurefile}}
    {\acmtemplateincludegraphics[#2]{\acmfigurefile}}%
  \endgroup}

\newcommand{\acminputnumberedsection}[1]{%
  \begingroup
  \let\acmoriginalsection\section
  \RenewDocumentCommand{\section}{s o m}{%
    \IfNoValueTF{##2}
      {\acmoriginalsection{##3}}
      {\acmoriginalsection[##2]{##3}}}
  \input{#1}
  \endgroup}

\title{Lossless Tensor Compression as Program Synthesis}

\author{Jieke Shi}
\email{jiekeshi@smu.edu.sg}
\affiliation{%
  \institution{Singapore Management University}
  \country{Singapore}
}

\author{Junda He}
\email{jundahe.2022@smu.edu.sg}
\affiliation{%
  \institution{Singapore Management University}
  \country{Singapore}
}

\author{Wenjia Jiang}
\email{wjjiang@smu.edu.sg}
\affiliation{%
  \institution{Singapore Management University}
  \country{Singapore}
}

\author{Weifeng Sun}
\email{wfsun@smu.edu.sg}
\affiliation{%
  \institution{Singapore Management University}
  \country{Singapore}
}

\author{Shidong Pan}
\email{shidong.pan@anu.edu.au}
\affiliation{%
  \institution{CSIRO's}
  \country{Australia}
}

\author{Zhensu Sun}
\email{zssun@smu.edu.sg}
\affiliation{%
  \institution{Singapore Management University}
  \country{Singapore}
}

\author{Chengran Yang}
\email{cryang.2021@smu.edu.sg}
\affiliation{%
  \institution{Singapore Management University}
  \country{Singapore}
}

\author{Peixin Zhang}
\email{pxzhang@smu.edu.sg}
\affiliation{%
  \institution{AIDX TECH PTE LTD}
  \country{Singapore}
}

\author{Yifan Jia}
\email{yifan.jia@aidxtech.com}
\affiliation{%
  \institution{AIDX TECH PTE LTD}
  \country{Singapore}
}

\author{Zhou Yang}
\email{zy25@ualberta.ca}
\affiliation{%
  \institution{University of Alberta \& Alberta Machine Intelligence Institute}
  \country{Canada}
}

\author{Thong Hoang (James)}
\email{James.Hoang@csiro.au}
\affiliation{%
  \institution{CSIRO's}
  \country{Australia}
}

\author{Xiwei (Sherry) Xu}
\email{Xiwei.Xu@csiro.au}
\affiliation{%
  \institution{CSIRO's}
  \country{Australia}
}

\author{Zhenchang Xing}
\email{Zhenchang.Xing@csiro.au}
\affiliation{%
  \institution{CSIRO's}
  \country{Australia}
}

\author{David Lo}
\email{davidlo@smu.edu.sg}
\affiliation{%
  \institution{Singapore Management University}
  \country{Singapore}
}

\renewcommand{\shortauthors}{Jieke Shi et al.}

\makeatletter
\def\@mkauthors{%
  \global\setbox\mktitle@bx=\vbox{%
    \noindent\unvbox\mktitle@bx\par\medskip
    \centering
    {\large\normalfont
      Jieke Shi\textsuperscript{1},
      Junda He\textsuperscript{1},
      Wenjia Jiang\textsuperscript{1},
      Weifeng Sun\textsuperscript{1},
      Shidong Pan\textsuperscript{2},
      Zhensu Sun\textsuperscript{1},
      Chengran Yang\textsuperscript{1},\\[-0.1ex]
      Peixin Zhang\textsuperscript{3},
      Yifan Jia\textsuperscript{3},
      Zhou Yang\textsuperscript{4},
      Thong Hoang (James)\textsuperscript{2},
      Xiwei (Sherry) Xu\textsuperscript{2},
      Zhenchang Xing\textsuperscript{2},
      David Lo\textsuperscript{1}\par}
    \vspace{0.45em}
    {\small
      \textsuperscript{1}Singapore Management University, Singapore
      \quad
      \textsuperscript{2}CSIRO, Australia\\[-0.1ex]
      \textsuperscript{3}AIDX TECH PTE LTD, Singapore
      \quad
      \textsuperscript{4}University of Alberta \& Alberta Machine Intelligence Institute, Canada\par}
    \vspace{0.35em}
    {\scriptsize\ttfamily
      \{jiekeshi,jundahe.2022,wjjiang,wfsun,zssun,cryang.2021,pxzhang,davidlo\}@smu.edu.sg\\[-0.1ex]
      shidong.pan@anu.edu.au;
      \{James.Hoang,Xiwei.Xu,Zhenchang.Xing\}@csiro.au\\[-0.1ex]
      yifan.jia@aidxtech.com;
      zy25@ualberta.ca\par}
    \medskip}}
\makeatother

\AtEndPreamble{%
  \hypersetup{
    colorlinks=true,
    linkcolor=BrevisLink,
    citecolor=BrevisLink,
    urlcolor=BrevisLink,
    filecolor=BrevisLink
  }%
}

\begin{document}

\begin{abstract}
Model checkpoints are growing in both number and size, which makes archival, transfer, and deployment increasingly costly. General-purpose compressors can reduce storage requirements but ignore tensor structure, whereas existing tensor-specific compressors rely on fixed and format-specific pipelines. We present \tool, which formulates lossless tensor compression as program synthesis. We design a typed domain-specific language (DSL) that captures recurring tensor structures, such as repeated regions and floating-point fields, through a set of reversible operators. Given a tensor, \tool synthesizes a self-contained DSL program that reconstructs it bit-exactly. A checkpoint-specific production prior, learned from a small representative sample of tensors, guides a bounded A* search to synthesize compact programs, which can later be executed directly for bit-exact decompression. On 10 public checkpoints spanning language, audio, and image generation models, \tool reduces 2.13~TB of checkpoint data to 1.41~TB, a 33.93\% storage reduction. It produces archives up to 30.87\% smaller than those of four general-purpose compressors, including zstd and gzip, and smaller archives than the tensor-specific compressors ZipNN and DFloat11. Under a practical concurrency configuration, \tool achieves 3.60~GB/s compression and 6.61~GB/s decompression while preserving every source byte.
\end{abstract}

\maketitle

\acminputnumberedsection{sections/intro}
\section{Background and Related Work}

\subsection{Lossless Compression}

Given a tensor $X$ and archive $c$, lossless compression requires
\begin{equation}
  \mathsf{Decompress}(c)=_{\mathrm{bit}}X,
  \qquad
  \mathsf{Reduction}(X,c)=1-\frac{|c|}{|X|},
\end{equation}
where $=_{\mathrm{bit}}$ denotes bitwise equality and $|c|$ includes all data and metadata required for reconstruction. General-purpose codecs, such as gzip and Zstandard, combine dictionary matching, reversible transforms, and entropy coding but treat tensors as generic byte streams~\cite{deutsch1996gzip,collet2021zstd}. Float-aware preprocessing further exploits numerical structure through Bitshuffle, Typed Data Transformation (TDT), and ALP~\cite{masui2015bitshuffle,jamalidinan2025tdt,afroozeh2023alp}. These methods improve compression through fixed transformations or predefined codec families.

Model-specific compressors exploit statistical properties of learned weights. ZipNN rearranges floating-point fields before lossless coding~\cite{hershcovitch2025zipnn}; DFloat11, ECF8, and ZipMoE target exponent distributions or model-specific layouts~\cite{zhang2025dfloat11,yang2026ecf8,yang2026zipmoe,wang2026zipllm}; Huff-LLM and tile-aligned ANS integrate entropy coding with efficient inference~\cite{yubeaton2025huffllm,tan2026shannon}; ZipServ, DFloat11, and ENEC further co-design representations with hardware decoders~\cite{fan2026zipserv,zhang2025dfloat11,yang2026enec}. In contrast, \tool targets bit-exact archival by synthesizing a self-contained tensor program rather than selecting a predefined codec or hardware-specific representation.

\begin{figure*}[t!]
  \centering
  \input{figures/brevis-overview-body.tex}
  \caption{Compression as program synthesis in \tool. Each tensor is
  represented by a short reversible program whose execution regenerates its
  bits exactly. (1)~Read as IEEE 754 fields, 6 FP32 words expose 3
  different regularities, and no single codec captures all of them.
  (2)~Target-guided $A^*$ searches the typed grammar and ranks candidates by
  the complete serialized size $L(P)$, which includes the literal payload and
  codec tables rather than the program structure. (3)~Decoding validates
  $P$ and executes it without search and without the learned prior; a
  universal \textsc{Lit} fallback keeps every supported tensor
  representable.}
  \label{fig:overview}
\end{figure*}

\subsection{Lossy Tensor and Model Compression}

Lossy compression trades information for storage or execution efficiency. Quantization and pruning reduce model size by modifying weights~\cite{frantar2023gptq,lin2024awq}; LLM.265 repurposes video codecs for tensors~\cite{xu2025llm265}; and NeuZip studies lossless training with near-lossless inference~\cite{hao2024neuzip}. These methods target inference efficiency rather than exact reconstruction, whereas \tool focuses on bit-exact checkpoint archival.

\subsection{Program Synthesis}

Program synthesis searches a program space $\mathcal P$ for a program that satisfies a specification $\phi$ while minimizing a cost function $\mathsf{Cost}$:
\begin{equation}
P^* \in \arg\min_{P\in\mathcal P}\mathsf{Cost}(P)
\quad\text{s.t.}\quad
\phi(P,X).
\end{equation}
Syntax-guided synthesis restricts the search space with a grammar, while probability-guided methods prioritize likely productions~\cite{alur2013sygus}. PHOG, Euphony, and TF-Coder learn production probabilities or operation priors to accelerate search~\cite{bielik2016phog,lee2018euphony,shi2022tfcoder}, and the KoLMogorov Test studies exact sequence generation through synthesized programs~\cite{yoran2025kolmogorov}. Compression systems have also synthesized floating-point algorithms or searched transformation pipelines, including OpenZL's graph-based framework~\cite{burtscher2016fpcrush,claggett2018spdp,rodriguez2024adaptive,collet2025openzl}. Unlike these approaches, \tool formulates tensor compression itself as program synthesis: the DSL defines the search space, exact reconstruction defines correctness, serialized program size defines the objective, and a checkpoint-specific prior guides bounded A* search.

\begingroup
\acmbreakablemathcommas
\section{Method}

\subsection{Overview and Problem Formulation}

\tool represents each tensor as a typed, self-contained program whose execution
reconstructs the original bits. Figure~\ref{fig:overview} summarizes the
workflow. During compression, \tool learns a checkpoint-specific production
prior, i.e., a probability distribution over DSL productions, from a small
deterministic sample of tensors. This prior guides bounded A* synthesis over a
tensor-oriented grammar, while exact serialized size determines the selected
program. Reversible operators expose repeated regions, element relationships,
and low-entropy floating-point fields, while literal codecs encode the
remaining streams. During decompression, the synthesized program is validated
and executed directly without search or the learned prior.

\begin{figure*}[t]
  \centering
  \input{figures/brevis_dsl_body.tex}
  \caption{Typed operators in the \tool language. \textsc{Lit} is the
  universal fallback; all internal operators are reversible for their stored
  parameters.}
  \label{fig:dsl}
\end{figure*}

Let tensor $X$ have dtype $d$, shape $\mathbf{s}$, and $n$ elements. If $b(d)$
is the physical width of $d$, \tool flattens $X$ in checkpoint order into
\begin{equation}
  \mathbf{x}=\bits(X)\in\W{b(d)}^n,
  \qquad
  \W{b}=\{0,\ldots,2^b-1\}.
  \label{eq:tensor-bits}
\end{equation}
This representation preserves signed zeros, NaN payloads, and every other bit
pattern. We write $P:b[n]$ when program $P$ produces $n$ words of width $b$,
and $\execp(P)$ for its output. Program $P$ represents $X$ exactly when
\begin{equation}
  \execp(P)=\bits(X).
  \label{eq:program-correctness}
\end{equation}
The type $b[n]$ allows both the synthesizer and decoder to reject incompatible
widths and lengths.

The canonical encoding of $P$ contains its operation tags, parameters, literal
coding tables, lengths, and payloads. Let $L(P)$ denote the byte length of this
complete representation. Given expansion budget $B$, \tool selects
\begin{equation}
  \widehat P_{\mathbf{x}}
  =
  \arg\min_{P\in\mathcal C_B(\mathbf{x})} L(P)
  \quad\text{subject to}\quad
  \execp(P)=\mathbf{x},
  \label{eq:compression-objective}
\end{equation}
where $\mathcal C_B(\mathbf{x})$ contains the initial literal program and all
complete candidates found within the budget. The literal fallback makes
$\mathcal C_B(\mathbf{x})$ nonempty, while the equality constraint excludes
inexact programs. Thus, \tool minimizes exact serialized size within the
configured finite search space without claiming global optimality.

\subsection{Typed Tensor Language}

Figure~\ref{fig:dsl} presents the seven DSL operators and their well-formedness
rules. Each synthesis hole carries both a type $b[n]$ and the exact stream that
its completed subprogram must generate. Grammar rules may therefore introduce
only type-compatible children, and the decoder independently validates the
same widths, lengths, arities, and parameters.

\textsc{Lit} stores an arbitrary word stream and provides a universal
fallback, whereas \textsc{Const} stores one word and its repetition count.
\textsc{Concat} joins programs for adjacent regions, and \textsc{Repeat}
expands one nonempty child multiple times. Together, these operators capture
arbitrary values, constants, piecewise regions, and recurring subsequences.

\textsc{Map}, \textsc{Scan}, and \textsc{Merge} expose relationships that are
not visible as repeated strings. \textsc{Map} applies a width-preserving
bijection, including XOR or modular addition with a constant, ZigZag
coding~\cite{protobuf2026encoding}, Gray coding~\cite{doran2007gray}, bit reversal,
and rotation. \textsc{Scan} stores an initial word followed by XOR or
modular-addition updates, exposing adjacent correlations. \textsc{Merge}
combines equal-length children whose widths sum to the parent width, using
contiguous floating-point fields, or bit and byte planes.

The basic execution rules are
\begin{align}
  \execp(\operatorname{lit}_{b}(\mathbf v))&=\mathbf v, \notag\\
  \execp(\operatorname{const}_{b,n}(v))&=(v,\ldots,v), \notag\\
  \execp(\operatorname{concat}(P_1,\ldots,P_k))
    &=\execp(P_1)\cdots\execp(P_k), \notag\\
  \execp(\operatorname{repeat}_{k}(P))&=\execp(P)^k, \notag\\
  \execp(\operatorname{map}_{u,\theta}(P))
    &=u_\theta^*(\execp(P)).
  \label{eq:dsl-semantics}
\end{align}
For $\operatorname{scan}_{q,\theta}(v_0,P)$, execution starts with $v_0$ and
applies $q_\theta$ to the previous output and each update generated by $P$.
A \textsc{Merge} applies its stored composition pointwise. Every parameter is
serialized, every child has a fixed type, and the root must produce exactly
the words required by the tensor record.

\paragraph{Physical literal coding.}
\textsc{Lit} is a semantic leaf rather than necessarily a raw byte copy. The
encoder tries raw words, fixed-width bit packing, canonical Huffman coding,
and rANS, then selects the smallest complete encoding, including codec tags,
tables, lengths, and payloads. The decoder first restores the literal words
and then executes the surrounding program. Raw words remain available when
codec metadata would outweigh the compression benefit.

\paragraph{Running example.}
Consider the FP32 sequence
$(+1.0,-1.0,+1.0,-1.0,+1.0,-1.0)$. One synthesized program is
\begin{equation}
  \begin{aligned}
  P={}&\operatorname{merge}_{\mathrm{FP32}}\!\bigl(
    \operatorname{repeat}_{3}(\operatorname{lit}_{1}(0,1)),\\[-0.2ex]
    &\operatorname{const}_{8,6}(127),
    \operatorname{const}_{23,6}(0)\bigr).
  \end{aligned}
  \label{eq:running-example}
\end{equation}
Its three children represent the sign, exponent, and fraction fields. The sign
child stores $(0,1)$ once and repeats it three times, while the other two
generate the shared exponent and fraction. \textsc{Merge} reconstructs all six
FP32 words. A direct literal remains valid, and \tool selects the synthesized
program only if its serialized form is smaller.

\subsection{Target-Directed Synthesis}

\paragraph{Target-directed expansion.}
A naive synthesizer would enumerate programs and execute each one against the
target tensor, although most candidates would fail to reproduce it. \tool
instead searches backward from the target. Every hole is paired with the exact
stream that its completed subprogram must generate, and applying an operator
decomposes that stream into the required outputs of its children.

For operator $r$ with parameters $\theta$, let $G_{r,\theta}$ compose child
streams and $D_{r,\theta}$ decompose a target stream. \tool accepts only
decompositions satisfying
\begin{equation}
  D_{r,\theta}(\mathbf{x})=(\mathbf{x}_1,\ldots,\mathbf{x}_k)
  \Longrightarrow
  G_{r,\theta}(\mathbf{x}_1,\ldots,\mathbf{x}_k)=\mathbf{x}.
  \label{eq:decomposition-contract}
\end{equation}
Thus, if each child program generates its assigned stream, the parent is
guaranteed to generate $\mathbf{x}$. Correctness is preserved by construction
rather than checked by repeatedly executing complete candidates.

Each operator provides a target-specific inverse expansion. \textsc{Repeat}
applies only when the target consists of exact copies, \textsc{Map} inverts its
bijection, \textsc{Scan} derives updates from adjacent words, and
\textsc{Merge} splits each word into fields or planes. \textsc{Concat}
proposes boundaries derived from the target, while \textsc{Const} applies only
to constant streams. Each rule considers finitely many target-derived
parameters. Node, depth, arity, and memory limits keep the search finite, while
canonical forms remove identity and equivalent programs. The search budget is
therefore spent comparing exact representations rather than testing arbitrary
programs for correctness.

\paragraph{Checkpoint-specific production prior.}
The useful DSL productions vary across checkpoints. \tool therefore selects a
small deterministic sample stratified by dtype and tensor size, searches these
tensors with uniform production costs, and counts the productions used by the
smallest programs found. For each production, the context $\kappa$ records the
hole type, parent operator, child position, depth, dtype, size bucket, zero
fraction, distinct-value ratio, repetition ratio, and entropy of adjacent
differences.

\begin{algorithm}[t!]
  \caption{Bounded A* synthesis for one tensor.}
  \label{alg:synthesis}
  \footnotesize
  \begin{algorithmic}[1]
    \REQUIRE Target $\mathbf{x}:b[n]$, prior $\widehat q$, expansion budget $B$
    \STATE $P_{\mathrm{best}}\leftarrow\operatorname{lit}_{b}(\mathbf{x})$
    \STATE $Q\leftarrow\{H(b[n],\mathbf{x})\}$; $e\leftarrow0$
    \STATE $s_{\mathrm{cut}}\leftarrow\bot$
    \WHILE{$Q$ is not empty}
      \STATE $s\leftarrow\textsc{PopMin}(Q)$ by A* cost and byte bound
      \IF{$\operatorname{LB}_{L}(s)\ge L(P_{\mathrm{best}})$}
        \STATE \textbf{continue}
      \ENDIF
      \IF{$s$ is complete}
        \STATE $P_{\mathrm{best}}\leftarrow
          \arg\min_{P\in\{P_{\mathrm{best}},s\}}L(P)$
        \STATE \textbf{continue}
      \ENDIF
      \IF{$e=B$}
        \STATE $s_{\mathrm{cut}}\leftarrow
          \textsc{Prefer}(s_{\mathrm{cut}},s)$
        \STATE \textbf{continue}
      \ENDIF
      \STATE $e\leftarrow e+1$
      \STATE $H(\tau,\mathbf v)\leftarrow\textsc{LeftmostHole}(s)$
      \FOR{each valid target-directed expansion of $H(\tau,\mathbf v)$}
        \STATE push the resulting typed state into $Q$
      \ENDFOR
    \ENDWHILE
    \IF{$s_{\mathrm{cut}}\ne\bot$}
      \STATE $P_{\mathrm{roll}}\leftarrow
        \textsc{CompleteWithLiterals}(s_{\mathrm{cut}})$
      \STATE $P_{\mathrm{best}}\leftarrow
        \arg\min_{P\in\{P_{\mathrm{best}},P_{\mathrm{roll}}\}}L(P)$
    \ENDIF
    \RETURN $P_{\mathrm{best}}$
  \end{algorithmic}
\end{algorithm}

With rule counts $N(r,\kappa)$, admissible production set $\mathcal{R}(\kappa)$, and additive smoothing $\beta>0$, we follow PHOG and Euphony~\cite{bielik2016phog,lee2018euphony} to estimate the production prior:
\begin{equation}
  \widehat q(r\mid\kappa)
  =
  \frac{N(r,\kappa)+\beta}
       {\sum_{r'\in\mathcal R(\kappa)}N(r',\kappa)
        +\beta|\mathcal R(\kappa)|}.
  \label{eq:smoothed-rule-model}
\end{equation}
Unseen detailed contexts back off to coarser ones. Every valid production
retains nonzero probability, so the prior changes only the exploration order
and never makes a valid program unreachable. The prior is deterministic for a
fixed checkpoint and is not stored with the compressed representation.

\paragraph{Bounded A* search.}
A search state $s$ contains a partial typed program whose holes are paired with their required output streams; $H(\tau,\mathbf v)$ is a hole of type $\tau$ targeting $\mathbf v$. Applying production $r$ under context $\kappa$ incurs cost
\begin{equation}
  w(r,\kappa)=-\log_2 \widehat q(r\mid\kappa).
  \label{eq:rule-cost}
\end{equation}
States are prioritized by $g(s)+h(s)$, where $g(s)$ is the accumulated production cost and $h(s)$ is an admissible completion cost computed from a relaxed grammar. The relaxation preserves width and coarse length constraints while omitting target guards and concrete parameters, ensuring that $h(s)$ remains optimistic. Consequently, the learned prior affects only the exploration order, whereas candidate quality is determined independently by their serialized size.

A separate byte lower bound $\operatorname{LB}_{L}(s)$ accounts for fixed operation tags, parameters, and a minimum closing cost for each remaining hole. States whose lower bound cannot improve upon the current best program are pruned, while complete candidates are evaluated using their exact serialized size $L(P)$. Algorithm~\ref{alg:synthesis} summarizes the procedure. The initial incumbent is a literal program for the complete target, and search continues after the first complete candidate because the most probable program need not be the smallest. When the expansion budget is exhausted, \textsc{Prefer} retains the open state the queue orders first, and \tool completes its remaining holes with literals and evaluates the rollout, allowing structures discovered near the search boundary to remain competitive.

If the frontier is exhausted or safely pruned, the result is optimal within
the configured finite search space. If the expansion budget is reached,
\tool returns the smallest exact candidate encountered, including the
literal-completed rollout. The procedure does not claim global optimality over
unbounded programs, but every returned program reconstructs the target
exactly.

\subsection{Archive Format and Bit-Exact Reconstruction}

The archive preserves the original \texttt{safetensors} header and layout.
Each tensor record stores its name, dtype, shape, and synthesized program.
Tensor records can be compressed and decoded independently by a bounded worker
pool, while source-order emission keeps the archive deterministic. Before
execution, the decoder parses the complete program and validates the type
rules in Figure~\ref{fig:dsl}. It rejects invalid widths, lengths, parameters,
or literal payloads, executes the program, and restores the bytes at their
original offsets.

\begin{table*}[ht!]
  \caption{Complete archive results across 10 checkpoints. Each method cell
  reports archive size in decimal GB/storage reduction. Bold cells and
  underlined reductions mark the best and second-best comparable results.}
  \label{tab:checkpoint-results}
  \centering
  \scriptsize
  \renewcommand{\arraystretch}{1.08}
  \newcommand{\evalcell}[2]{#1/#2\%}
  \newcommand{\secondcell}[2]{#1/\underline{#2\%}}
  \newcommand{\bestcell}[2]{\textbf{#1/#2\%}}
  \setlength{\tabcolsep}{2.1pt}
  \begin{tabular*}{\textwidth}{@{\extracolsep{\fill}}llccccccccc@{}}
    \toprule
    Domain & Checkpoint & Data type & Source &
      \multicolumn{7}{c}{Archive size (GB)/storage reduction} \\
    \cmidrule(lr){5-11}
    & & & (GB) & zstd & LZ4 & gzip$^\dagger$ & Snappy & ZipNN &
      DFloat11$^\ddag$ & \textbf{\tool} \\
    \midrule
    Language & BERT Base & FP32 & 0.440 &
      \evalcell{0.407}{7.59} & \evalcell{0.440}{0.00} &
      \evalcell{0.407}{7.56} & \evalcell{0.440}{$-$0.01} &
      \secondcell{0.366}{16.84} & -- & \bestcell{0.365}{17.11} \\
    & Llama-3.1-8B & BF16 & 16.061 &
      \evalcell{12.38}{22.91} & \evalcell{15.94}{0.77} &
      \evalcell{12.50}{22.17} & \evalcell{16.06}{0.00} &
      \secondcell{10.66}{33.63} & \evalcell{10.90}{32.16} &
      \bestcell{10.58}{34.13} \\
    & Ministral-3-8B & BF16 & 17.836 &
      \evalcell{13.79}{22.67} & \evalcell{17.70}{0.79} &
      \evalcell{13.94}{21.85} & \evalcell{17.84}{0.01} &
      \secondcell{11.83}{33.68} & -- & \bestcell{11.74}{34.18} \\
    & Qwen3-32B & FP8 & 34.323 &
      \evalcell{28.41}{17.21} & \evalcell{34.30}{0.07} &
      \evalcell{28.29}{17.57} & \evalcell{34.33}{$-$0.01} &
      \secondcell{28.08}{18.18} & -- & \bestcell{27.85}{18.85} \\
    & Qwen3-32B & BF16 & 65.524 &
      \evalcell{50.84}{22.41} & \evalcell{65.05}{0.72} &
      \evalcell{51.36}{21.62} & \evalcell{65.53}{$-$0.01} &
      \secondcell{43.65}{33.38} & -- & \bestcell{43.28}{33.94} \\
    & Llama-3.1-70B & BF16 & 141.107 &
      \evalcell{108.58}{23.05} & \evalcell{140.00}{0.79} &
      \evalcell{109.67}{22.28} & \evalcell{141.12}{$-$0.01} &
      \secondcell{93.50}{33.74} & -- & \bestcell{92.73}{34.28} \\
    & Mixtral-8$\times$22B & BF16 & 281.241 &
      \evalcell{217.40}{22.70} & \evalcell{279.15}{0.74} &
      \evalcell{219.97}{21.79} & \evalcell{281.28}{$-$0.01} &
      \secondcell{186.23}{33.78} & -- & \bestcell{184.85}{34.27} \\
    & GLM-5.2 & BF16 & 1,506.667 &
      \evalcell{1,159.70}{23.03} & \evalcell{1,494.27}{0.82} &
      \evalcell{1,171.85}{22.22} & \evalcell{1,506.85}{$-$0.01} &
      \secondcell{998.71}{33.71} & -- & \bestcell{991.73}{34.18} \\
    \midrule
    Audio & Voxtral-Mini-3B & BF16 & 9.356 &
      \evalcell{7.215}{22.89} & \evalcell{9.269}{0.93} &
      \evalcell{7.297}{22.01} & \evalcell{9.355}{0.02} &
      \secondcell{6.226}{33.46} & -- & \bestcell{6.186}{33.89} \\
    \midrule
    Image & Qwen-Image & BF16 & 57.699 &
      \evalcell{44.64}{22.63} & \evalcell{57.25}{0.79} &
      \evalcell{45.07}{21.89} & \evalcell{57.70}{0.00} &
      \secondcell{38.42}{33.42} & -- & \bestcell{38.14}{33.89} \\
    \bottomrule
  \end{tabular*}

  \vspace{2pt}
  \parbox{\textwidth}{\scriptsize
  $^\dagger$gzip uses libdeflate 1.19 in gzip mode at DEFLATE level 1.

  $^\ddag$DFloat11 reconstructs BF16 tensor bits in its native directory format; its
  Llama-3.1-8B result was independently validated at the bit level. Dashes
  indicate unsupported or unevaluated combinations. DFloat11 is excluded from
  the paired statistical analysis.}
\end{table*}

\begin{proposition}[Compositional bit-exact reconstruction]
\label{prop:bit-exactness}
Let $P$ be a complete, well-formed program whose root is assigned
$\mathbf{x}_{\mathrm{root}}=\bits(X)$. Assume that (i) serialization and
parsing preserve the typed tree and all stored parameters, (ii) every physical
literal codec $c$ satisfies
$\operatorname{Dec}_{c}(\operatorname{Enc}_{c}(\mathbf v))=\mathbf v$, and
(iii) every internal node uses child targets returned by a decomposition
satisfying Equation~\ref{eq:decomposition-contract}. Then executing the parsed
program yields $\bits(X)$.
\end{proposition}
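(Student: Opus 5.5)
The plan is structural induction on the typed program tree $P$. The invariant is: for every node $N$ of the parsed program that the synthesizer assigned target stream $\mathbf{v}_N$, executing $N$ yields exactly $\mathbf{v}_N$. Applied at the root with $\mathbf{v}_{\mathrm{root}}=\mathbf{x}_{\mathrm{root}}=\bits(X)$, this gives the claim.

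First, assumption (i) lets us identify the parsed program with the synthesized one. The parsed tree has the same shape, the same operator tags, the same stored parameters $\theta$, and the same type annotations $b[n]$. Well-formedness, which the decoder checks against the rules of Figure~\ref{fig:dsl}, guarantees that each operator's execution rule in Equation~\ref{eq:dsl-semantics} (and the stated \textsc{Scan}/\textsc{Merge} semantics) is defined on the children's outputs: widths sum correctly for \textsc{Merge}, lengths match for \textsc{Concat}/\textsc{Repeat}, and so on. Hence $\execp$ is a total function on this tree, and for each internal node labelled $(r,\theta)$ with children $P_1,\ldots,P_k$ we have $\execp(N)=G_{r,\theta}(\execp(P_1),\ldots,\execp(P_k))$. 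Stating explicitly that $G_{r,\theta}$ is exactly the composition given by the execution semantics is the one identification I would spell out.

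Base cases are the leaves. For a \textsc{Lit} leaf, target $\mathbf{v}$ was physically stored as $\operatorname{Enc}_c(\mathbf{v})$ for the chosen codec $c$, and the codec tag survives parsing by (i). The decoder restores $\operatorname{Dec}_c(\operatorname{Enc}_c(\mathbf{v}))=\mathbf{v}$ by (ii), and $\execp(\operatorname{lit}_b(\mathbf v))=\mathbf v$. For a \textsc{Const} leaf, the synthesizer only applies the rule to a constant stream $(v,\ldots,v)$ of length $n$. Since $v$ and $n$ are preserved by (i), execution returns that stream.

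Inductive step: an internal node $N$ with target $\mathbf{v}_N$ had child targets $(\mathbf{x}_1,\ldots,\mathbf{x}_k)=D_{r,\theta}(\mathbf{v}_N)$ by (iii). By the induction hypothesis, $\execp(P_i)=\mathbf{x}_i$. Then
\[
\execp(N)=G_{r,\theta}(\mathbf{x}_1,\ldots,\mathbf{x}_k)=\mathbf{v}_N,
\]
by Equation~\ref{eq:decomposition-contract}. \textsc{Scan} needs a little care, since its stored initial word $v_0$ is a parameter rather than a child. I would treat $v_0$ as part of $\theta$, which (i) preserves, so the contract still applies with a single update child.

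The main obstacle is not the induction itself but making sure the hypotheses cover every node a returned program can contain. In particular, a program returned via \textsc{CompleteWithLiterals} from a cut-off state $s_{\mathrm{cut}}$ has internal nodes created by target-directed expansions, which satisfy (iii). Its former holes $H(\tau,\mathbf v)$ are filled by $\operatorname{lit}_\tau(\mathbf v)$ with exactly the paired stream $\mathbf v$, so they fall under the \textsc{Lit} base case. Finally, the tree is finite because of the depth and node limits, so well-founded induction applies.
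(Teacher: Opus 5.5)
Your proposal is correct and follows essentially the same route as the paper: induction over the program tree with the node invariant $\execp(P_t)=\mathbf{x}_t$, using (ii) at \textsc{Lit} leaves, the constant-stream condition at \textsc{Const} leaves, and the induction hypothesis plus Equation~\ref{eq:decomposition-contract} at internal nodes. Your additional remarks are harmless refinements that the paper leaves implicit: treating the \textsc{Scan} initial word as part of $\theta$, and noting that literal-completed rollouts fall under the \textsc{Lit} case.
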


\begin{proof}
Let $\mathbf{x}_t$ be the target assigned to node $t$. We prove by induction
on subtree height that
\begin{equation}
  \execp(P_t)=\mathbf{x}_t
  \label{eq:node-invariant}
\end{equation}
for every parsed subtree $P_t$. For a \textsc{Lit} leaf, assumption (ii)
restores its stored stream exactly. For a \textsc{Const} leaf,
well-formedness requires the assigned target to equal the stored value repeated
to the specified length.

Now consider an internal node using production $r$, parameters $\theta$, and
children $t_1,\ldots,t_k$. By induction,
$\execp(P_{t_i})=\mathbf{x}_{t_i}$ for every child. The decoder applies the
stored composition function $G_{r,\theta}$, and
Equation~\ref{eq:decomposition-contract} gives
\begin{align}
  \execp(P_t)
    &=G_{r,\theta}\bigl(\execp(P_{t_1}),\ldots,\execp(P_{t_k})\bigr) \notag\\
    &=G_{r,\theta}(\mathbf{x}_{t_1},\ldots,\mathbf{x}_{t_k})
      =\mathbf{x}_t .
\end{align}
The invariant thus holds at the root, where
$\mathbf{x}_{\mathrm{root}}=\bits(X)$.
\end{proof}

The root type agrees with the stored dtype, shape, and layout, so placing the
generated words at their recorded offsets reconstructs $X$ bit for bit. The
guarantee assumes correct operator implementations and does not imply global
search optimality.

\endgroup
\section{Evaluation}

\subsection{Experimental Setup}

\paragraph{Corpus and baselines.}
Our corpus contains 10 public checkpoints from Hugging Face, including eight language models~\cite{devlin2019bert,grattafiori2024llama3,qwen2025qwen3,jiang2024mixtral,glm5team2026}, one audio model, and one image generation model. It spans eight BF16, one FP32, and one FP8 checkpoint, comprising 420 canonical shards and 2,130,256,127,862 source bytes (2.130~TB). We compare \tool against zstd 1.5.7 (level 9), ZipNN 0.5.4, LZ4 1.9.4 (HC level 9), gzip, and Snappy 0.7.3. We additionally compare with DFloat11 on Llama-3.1-8B, the only model for which a validated native result is available. TDT, ZipServ, and ENEC require different software or hardware backends, LLM.265 is lossy, and ECF8~\cite{yang2026ecf8} is excluded because its official CUDA validator rejected the decoded output in our experiment. These methods are therefore not directly comparable.

\paragraph{Configuration.}
The main \tool configuration uses one A* expansion per tensor and 32 workers. To learn the checkpoint-specific production prior, calibration samples at most four tensors, performs six expansions per tensor, and examines at most 1,048,576 elements from each tensor. Programs are limited to 64 nodes and depth 4, with at most 512~MiB for open decompositions. All experiments use warm file caches on one AMD EPYC 9654 server with 192 physical cores, 724~GiB RAM.

\paragraph{Measurement.}
For each checkpoint, we aggregate source and archive bytes over canonical shards and compute
\[
  \mathrm{CR}=\frac{\sum_i S_i}{\sum_i C_i},
  \qquad
  \mathrm{Saving}_m=\frac{C_m-C_{\tool}}{C_m},
\]
where $S_i$ and $C_i$ denote the source and archive sizes of shard $i$, and $C_m$ is the total archive size produced by baseline $m$. All 60 archives in the complete comparison matrix were decoded and retained only after exact reconstruction. \tool, zstd, LZ4, gzip, and Snappy reconstruct the original files byte for byte; ZipNN preserves tensor names, metadata, dtypes, shapes, and payload bits but may alter the \texttt{safetensors} layout. We compute 95\% percentile bootstrap intervals by resampling checkpoints 10,000 times with seed 20260729. For the five complete-corpus baselines, we use two-sided exact Wilcoxon signed-rank tests and report Holm-adjusted $p$-values $p_{\mathrm{H}}$ with matched-pairs rank-biserial correlation $r_{\mathrm{rb}}$.

\subsection{Compression Effectiveness}

Across the complete corpus, \tool reduces 2.130~TB to 1.407~TB, yielding a compression ratio of 1.5135 and a 33.93\% storage reduction, or 722.79~GB saved. As shown in Table~\ref{tab:checkpoint-results}, \tool produces the smallest archive for every checkpoint and available comparison. Against the four general-purpose compressors, it is smaller on all 10 checkpoints, with mean per-checkpoint savings ranging from 12.94\% over zstd to 30.87\% over Snappy (Figure~\ref{fig:baseline-improvement}). All five complete-corpus comparisons, including ZipNN, remain significant after Holm correction ($p_{\mathrm{H}}=0.0098$), with the maximum matched-pairs effect size ($r_{\mathrm{rb}}=1.0$).

ZipNN is the closest complete-corpus baseline. \tool is smaller on all 10 checkpoints and saves 10.21~GB in aggregate. Its pooled saving is 0.72\% [0.70\%, 0.81\%], and its mean per-checkpoint saving is 0.71\% [0.61\%, 0.78\%]. The mean saving across the eight BF16 checkpoints is 0.75\% [0.71\%, 0.79\%], while the FP32 and FP8 savings are 0.32\% and 0.81\%, respectively, indicating that the aggregate result is not driven solely by the largest checkpoint. We additionally compare with DFloat11 using its only publicly available and independently validated native result on Llama-3.1-8B. DFloat11 occupies 10.896~GB, compared with 10.579~GB for \tool, making \tool 316.27~MB, or 2.90\%, smaller. Because DFloat11 does not provide comparable results for the remaining checkpoints and uses a different container format, we do not generalize this comparison beyond Llama-3.1-8B.

\subsection{Throughput}
Figure~\ref{fig:throughput} compares storage reduction and throughput on Llama-3.1-70B. \tool reaches 3.60~GB/s compression and 6.61~GB/s decompression with a compression ratio of 1.522. Compared with ZipNN, it compresses 7.2\% faster while producing a 0.82\% smaller archive, placing it on both Pareto frontiers. zstd and LZ4 decompress faster, and Snappy compresses faster, but all achieve lower storage reduction. Throughput uses the 141.11-GB source size and full-checkpoint elapsed time, including process launch and output \texttt{fsync}, but excluding cache conditioning and verification. Because methods use practical rather than identical resource configurations, these comparisons are descriptive.

\begin{figure}[t]
  \centering
  \includegraphics[width=0.90\columnwidth]{figures/baseline-improvement.pdf}
  \caption{Mean archive saving of \tool over the five complete-corpus
  baselines. Lines show 95\% bootstrap intervals over checkpoints.}
  \label{fig:baseline-improvement}
\end{figure}

\begin{figure}[t]
  \centering
  \includegraphics[width=0.90\columnwidth]{figures/throughput-results.pdf}
  \caption{Storage reduction versus compression and decompression throughput. Upper right is better; lines show the Pareto frontiers.}
  \label{fig:throughput}
\end{figure}

\subsection{Synthesis Analysis and Ablation}

We analyze synthesis on the first 4.98 GB shard of Llama-3.1-8B. At budget 1, \tool already captures most of the achievable compression and completes in 9.04 seconds. Budgets 32 and 256 save an additional 3.13~MB and 5.48~MB but require 135.04 and 1,399.89 seconds, respectively, motivating the budget of one expansion used in the main evaluation. Table~\ref{tab:search-analysis} further evaluates A* search and the checkpoint-specific production prior. Removing A* increases archive size by 2,873,336 bytes, removing the prior adds 389 bytes, and removing both increases it by 3,123,197 bytes. Together, these results suggest that bounded A* and the learned prior work synergistically, with A* providing the primary compression gain while the prior further improves guided search.

\begin{table}[t]
  \caption{Search-guidance ablation on the Llama-3.1-8B
  shard. Extra bytes are relative to the full configuration.}
  \label{tab:search-analysis}
  \centering
  \begin{tabular*}{\columnwidth}{@{\extracolsep{\fill}}lrr@{}}
    \toprule
    Configuration & Extra bytes vs.\ full & Time (s) \\
    \midrule
    Full A* + prior & 0 & 140.24 \\
    No A* & +2,873,336 & 118.09 \\
    No prior & +389 & 88.82 \\
    No A* or prior & +3,123,197 & 70.69 \\
    \bottomrule
  \end{tabular*}
\end{table}

\acminputnumberedsection{sections/conclusion-acm}

\bibliographystyle{ACM-Reference-Format}
\bibliography{ref}

\end{document}